\documentclass[letterpaper,11pt]{article}
\usepackage[margin=1in]{geometry}
\usepackage[utf8]{inputenc} % allow utf-8 input
\usepackage[T1]{fontenc}    % use 8-bit T1 fonts
\usepackage{libertine}    % Libertine font
\usepackage{libertinust1math}   % Libertine math font

\usepackage{macros}

\title{\LARGE \bfseries Hypothesis Selection via Weighted Medians}

\author{
    \begin{minipage}[t]{0.45\textwidth}
        \centering
        {\normalsize Liang-Yu Zou} \\
        {\small \textit{Aarhus University}} \\
        {\small$\mathtt{zou@cs.au.dk}$}
    \end{minipage}
}

\date{\vspace*{-1em}}

\begin{document}

% \hypersetup{pageanchor=false}
% \pagenumbering{gobble}          % Suppress page numbers (no page number displayed)

\maketitle
              
\begin{abstract}
Given $n$ distributions (hypotheses) $\mathcal H=\{H_1,\ldots,H_n\}$ and samples from an unknown target distribution $P$, proper hypothesis selection seeks a distribution whose total variation distance to $P$ is close to the best achievable within the known distributions.
We give an algorithm that, using $O(\varepsilon^{-2}\log(n/\delta))$ target samples, running in $O(n\varepsilon^{-2}\log(n/\delta))$ time, with probability at least $1-\delta$, returns $H_{\hat i}$ satisfying $d_\mathrm{TV}(P,H_{\hat i})\leq 3\min_{i\in[n]}d_\mathrm{TV}(P,H_i)+\varepsilon.$
The algorithm achieves the optimal approximation factor and sample complexity.
The running time improves the dependence on the failure probability from polynomial to logarithmic in the state-of-the-art bound $\tilde{O}(n\varepsilon^{-2}\delta^{-1})$ of Aamand, Aliakbarpour, Chen, and Silwal (COLT 2026), thereby answering their Question~2.
\end{abstract}
  
% \newpage                      

% \tableofcontents

% \newpage                       

% \pagenumbering{arabic}          
% \setcounter{page}{1}            
% \hypersetup{pageanchor=true}

\section{Introduction}
\label{sec:introduction}

Selecting a probability distribution from a finite class is a basic task in statistical learning.
The target distribution need not belong to the class: the aim is to compete with the best available distribution using samples from the target.
In hypothesis selection, the quality of this approximation is measured in total variation distance, and the challenge is to obtain statistical guarantees with a running time that scales favorably with the number of distributions.

Let $\cH=\{H_1,\ldots,H_n\}$ be a class of \emph{distributions}, each a probability measure on a measurable space $(\cX,\cB)$, and let $P$ be the unknown target distribution.
Fix a common dominating measure $\mu$ for $P,H_1,\ldots,H_n$, and write $h_i=dH_i/d\mu$ for the density of $H_i$.
We have sample access to $P$ and sample and density-query access to every $H_i$; the density of $P$ is not available to the algorithm.
Writing $\OPT=\min_{i\in[n]}\TV(P,H_i)$, the goal is to return a probability measure $Q$ such that, with probability at least $1-\delta$,
\begin{equation*}
    \TV(P,Q)\leq C\cdot\OPT+\eps.
\end{equation*}
Here $\eps\in (0,1)$ is the additive error and $\delta\in(0,1)$ is the failure probability.
The selection is \emph{proper} if $Q\in\cH$; otherwise, the algorithm is allowed to be \emph{improper}.

\paragraph{Statistical guarantees.}
Classical minimum-distance methods achieve approximation factor $C=3$ using $O(\eps^{-2}\log(n/\delta))$ samples from $P$~\cite{yatracos1985rates,devroye2001combinatorial}.
The factor $3$ is also optimal for proper selection with a sample size independent of the domain size~\cite{bousquet2019optimal,aamand2025fast}.
Allowing an improper output improves the achievable factor to $2$~\cite{bousquet2019optimal,bousquet2022statistically}.
Thus, for proper selection, the remaining issue is how efficiently the optimal statistical guarantee can be attained.

\paragraph{Computational guarantees.}
Classical factor-$3$ selectors have cubic or quadratic running times in $n$~\cite{devroye2001combinatorial,mahalanabis2007density}.
Aliakbarpour, Bun, and Smith~\cite{aliakbarpour2024optimal} obtained a nearly linear dependence on $n$, with running time $\tilde O(n/(\delta^3\eps^3))$ and optimal target-sample complexity.
Aamand, Aliakbarpour, Chen, and Silwal~\cite{aamand2025fast} improved the running time to $\tilde O(n/(\delta\eps^2))$ for $\delta>1/n$.
Selected bounds are compared in \cref{tab:prior}.
Aamand et al.\ ask whether proper selection can simultaneously achieve factor $3$, $O(\eps^{-2}\log(n/\delta))$ target samples, and running time $\tilde O(n\log(1/\delta)/\eps^2)$~\cite[Question~2]{aamand2025fast}.
We answer this question affirmatively.

\begin{table}[t]
\centering
\small
\renewcommand{\arraystretch}{1.2}
\begin{tabular}{@{}llll@{}}
\toprule
Algorithm & Running time & Failure probability& Comments\\
\midrule
Minimum distance~\cite{yatracos1985rates,devroye2001combinatorial}
    & $O(n^3\eps^{-2}\log n)$ & $1/\mathrm{poly}(n)$& \\
Modified minimum distance~\cite{mahalanabis2007density}
    & $O(n^2\eps^{-2}\log n)$ & $1/\mathrm{poly}(n)$& \\
Approx min distance~\cite{aliakbarpour2024optimal}
    & $\tilde O(n/(\delta^3\eps^3))$ & $\delta$& \\
Semi-distance threshold~\cite[Theorem~1.3]{aamand2025fast}
    & $\tilde O(n/(\delta\eps^2))$ & $\delta$& $\delta>1/n$\\
\midrule
This paper, \cref{thm:proper}
    & $O(n\eps^{-2}\log(n/\delta))$ & $\delta$&\\
\bottomrule
\end{tabular}
\caption{Proper hypothesis selection with approximation factor $3$.
All rows use $O(\eps^{-2}\log(n/\delta))$ target samples; for polynomially small failure probability, this is $O(\eps^{-2}\log n)$.
More results are collected in~\cite[Table~1]{aamand2025fast}.}
\label{tab:prior}
\end{table}

\begin{theorem}[Fast proper hypothesis selection]
\label{thm:proper}
Let $n\geq1$, $0<\eps\leq1$, and $0<\delta<1$. 
With probability at least $1-\delta$, \cref{alg:proper} returns a distribution $H_{\hat i}\in\cH$ which satisfies 
\[
    \TV(P,H_{\hat i})\leq3\OPT+\eps.
\]
The algorithm uses $O(\eps^{-2}\log(n/\delta))$ target samples, and runs in $O(n\eps^{-2}\log(n/\delta))$ time.
\end{theorem}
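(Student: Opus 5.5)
We construct \cref{alg:proper} around a weighted-median statistic and prove the theorem in three steps: a concentration event, a deterministic factor-$3$ argument on that event, and a cost accounting. For each ordered pair $(i,j)$ let $A_{ij}=\{x: h_i(x)>h_j(x)\}$ be the Scheffé set, so that $\TV(H_i,H_j)=H_i(A_{ij})-H_j(A_{ij})$. Using density queries, the algorithm estimates $H_i(A_{ij})$ by sampling from $H_i$ and $P(A_{ij})$ by an empirical frequency on a single shared batch of $m=O(\eps^{-2}\log(n/\delta))$ target samples. Querying all $n^2$ pairs is too expensive, so the algorithm works as a tournament. It draws one fresh pool of $m$ samples from each $H_i$ and evaluates every density on the shared target sample. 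This precomputation costs $O(nm)$ density queries, which is $O(n\eps^{-2}\log(n/\delta))$ time.

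The key statistic replaces the pairwise $\max_j$ of minimum distance with a weighted median over the $m$ target samples. For a candidate $i$ and reference $j$, the semi-distance $w_j(H_i)=H_i(A_{ij})-P(A_{ij})$ can be written as an average over samples. Since an estimate of $\TV(P,H_i)$ is never needed exactly, it suffices to know whether $\TV(P,H_i)\le \TV(P,H_{i^*})+2\OPT+\eps$. I would run an elimination procedure in the style of \cite{aamand2025fast}. Each round keeps a candidate set and picks a reference chosen by a weighted median of the current candidates' empirical semi-distance scores, with weights given by the number of candidates each score dominates. Candidates whose score against the reference exceeds the median plus $\eps/2$ are discarded. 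Because the reference sits at a weighted median, at least a constant fraction of the mass is discarded each round, so the total work is a geometric series of order $O(nm)$ rather than $O(nm\log n)$.

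The correctness argument proceeds as follows.
\begin{enumerate}
\item Take a union bound over the $O(n)$ pairs that are actually compared. Hoeffding then gives every empirical semi-distance within $\eps/8$ of its truth with probability $1-\delta$. This needs $m=O(\eps^{-2}\log(n/\delta))$, but only if the compared pairs are fixed independently of the target samples used for them. I would handle this by splitting the target sample into $O(1)$ batches, or by union-bounding over all pairs, which costs only $\log(n^2/\delta)=O(\log(n/\delta))$.
\item On the good event, the optimal $i^*$ satisfies $w_j(H_{i^*})\le \TV(P,H_{i^*})+\eps/8$ for every $j$, so it is never discarded unless something at least as good remains.
\item Any survivor $\hat i$ compared against a reference $j$ satisfies the triangle-type bound $\TV(P,H_{\hat i})\le \TV(P,H_j)+2\,|P(A)-H_j(A)|$ for the relevant Scheffé set. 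This yields $3\OPT+\eps$, exactly as in the classical Scheffé or minimum-distance analysis.
\end{enumerate}

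I expect the main obstacle to be the adaptivity in the weighted-median reference choice. Pairs are selected using the same target samples that then certify them, and a plain union bound over all $n^2$ pairs is affordable only for the sample count, not for the time. The first thing I would try is to note that the concentration event can be defined over all $n^2$ Scheffé sets without ever computing them, since the union bound is purely analytic. Time is then spent only on the $O(n)$ pairs the algorithm touches. Likewise, the $H_i$-side estimates should be taken from the precomputed pools, whose events can also be union-bounded over all pairs at cost $O(\log(n/\delta))$. With that settled, the remaining work is to show that the weighted median guarantees a constant-fraction weight decrease per round while never eliminating $i^*$. That gives the claimed $O(n\eps^{-2}\log(n/\delta))$ running time.
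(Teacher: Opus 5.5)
\textbf{There is a genuine gap.} The theorem is about Algorithm~\ref{alg:proper}, which is not a Scheff\'e tournament. It runs multiplicative weights over the hypotheses, forms in each round the pointwise lower weighted median $u_t(x)$ of $h_1(x),\ldots,h_n(x)$, and returns the hypothesis minimizing an estimate of $\frac1T\sum_t\frac12\|U_t-H_i\|_1$. You design a different elimination procedure instead, and even on its own terms its key claims are asserted rather than proved, and they fail in general.

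Your procedure is circular as written: a candidate's semi-distance score is defined only relative to a reference, yet the reference is chosen as a weighted median of those scores.

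Constant-fraction elimination does not follow from the reference sitting at a weighted median. That only places half the mass at or above the median score, not above the median plus $\varepsilon/2$. If the scores lie within $\varepsilon/2$ of one another, nothing is discarded and the geometric-series time bound collapses.

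Survival of $i^*$ is unsupported. The threshold is relative to the median score, which can lie far below $\mathrm{OPT}$: a semi-distance is measured on a single Scheff\'e set and can be near zero for a hypothesis far from $P$. So $i^*$ can be eliminated, and ``unless something at least as good remains'' is exactly what needs proof.

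The Scheff\'e argument gives $d_{\mathrm{TV}}(P,H_{\hat i})\le 2d_{\mathrm{TV}}(P,H_j)+\bigl(H_{\hat i}(A_{\hat i j})-P(A_{\hat i j})\bigr)$ only for references $j$ against which $\hat i$ was actually tested. To reach $3\mathrm{OPT}$ you need this with $j=i^*$; minimum distance guarantees that by maximizing over all $n$ references. With only $O(n)$ compared pairs, the survivor generally never faces $i^*$, and chaining comparisons through intermediate references multiplies the constant. Your union bound over all $n^2$ Scheff\'e sets correctly handles adaptivity in the concentration step, but not this combinatorial obstruction. It is the same obstruction that kept earlier factor-$3$ selectors either at least quadratic in $n$ or polynomial in $1/\delta$~\cite{aliakbarpour2024optimal,aamand2025fast}.

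The missing idea is to avoid pairwise comparisons altogether and route every triangle inequality through a common intermediate measure, the unnormalized median measure $U_t$. The balanced tests $\sigma_{t,i}\in[-1,1]$ satisfy $\sigma_{t,i}(u_t-h_i)=|u_t-h_i|$ and $\sum_iw_{t,i}\sigma_{t,i}=0$ pointwise.

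The first identity gives $\frac12\|U_t-H_i\|_1=\frac12\int\sigma_{t,i}\,d(U_t-P)+\frac12\int\sigma_{t,i}\,d(P-H_i)\le\mathbb{E}[g_{t,i}\mid\mathcal F_{t-1}]+d_{\mathrm{TV}}(P,H_i)$. The second makes the multiplicative update preserve total weight, so the regret argument caps every hypothesis's average gain at $O(\varepsilon)$. Azuma--Hoeffding then applies, because $m'_t\ge u_t/4$ keeps the importance-weighted gains and distance estimates in an $n$-independent range. This yields $\overline D_i\le d_{\mathrm{TV}}(P,H_i)+O(\varepsilon)$ for all $i$ simultaneously.

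Routing through $H_{i^*}$ gives $\frac1T\sum_t\frac12\|P-U_t\|_1\le2\mathrm{OPT}+O(\varepsilon)$. Choosing $\hat i$ to minimize the estimated $\overline D_i$ gives $\overline D_{\hat i}\le\mathrm{OPT}+O(\varepsilon)$. The triangle inequality through $U_t$ then gives $3\mathrm{OPT}+\varepsilon$ without $\hat i$ ever being compared with $i^*$. Each round evaluates densities and weighted medians only at the two sampled points, so it costs $O(n)$ time, which gives the stated bounds with $T=O(\varepsilon^{-2}\log(n/\delta))$ rounds.
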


\subsection{Overview of the method}
\label{sec:overview}
The algorithm compares the input distributions with the target through a common median measure.
Starting from uniform weights, it updates the weights so that the median measures are close, on average, to a best input distribution.

At round $t$, the algorithm maintains positive weights $\bw_t$ summing to one.
For each point $x$, let $u_t(x)$ be the lower weighted median of $h_1(x),\ldots,h_n(x)$ under these weights, and let $U_t$ be the measure with density $u_t$.
Since $U_t$ need not have mass one, we measure its distance to $H_i$ by $D_{t,i}=\frac12\|U_t-H_i\|_1$, where $\|\cdot\|_1$ denotes the total variation norm.

Balanced median tests $\sigma_{t,i}(x)$ indicate whether $h_i(x)$ lies below or above $u_t(x)$, with fractional tie values ensuring $\sum_iw_{t,i}\sigma_{t,i}(x)=0$.
These tests yield gains $g_{t,i}$ that guide multiplicative weight updates: positive gains increase the corresponding weights, while negative gains decrease them.
Since the median measure is not normalized and is difficult to sample from directly, each round uses one target sample and one sample from an easy-to-sample proposal distribution to compute the gains.
Importance weights correct for the difference between the proposal and the median measure, with the proposal chosen to keep these weights bounded.

The balance of the tests gives $\sum_iw_{t,i}g_{t,i}=0$, so the updates preserve the total weight without renormalization.
A standard multiplicative-weights argument~\cite{arora2012multiplicative}, combined with the Azuma--Hoeffding inequality, shows that each input distribution's average conditional gain is at most $O(\eps)$.
The tests also ensure that
\[
    D_{t,i}\leq\mathbb E[g_{t,i}\mid\cF_{t-1}]+\TV(P,H_i),
\]
where $\cF_{t-1}$ records the previous rounds.
Thus, with high probability, every distribution $H_{i}$ satisfies $\overline D_{i}\leq\TV(P,H_i)+O(\eps)$, where $\overline D_i=\frac1T\sum_t D_{t,i}$.
Applying the triangle inequality through $H_{i^*}$ then bounds the average distance between $U_t$ and $P$ by 
\[
    \frac1T\sum_{t=1}^T\frac12\|P-U_t\|_1
    \leq \overline D_{i^*}+\frac{1}{2}\|P-H_{i^*}\|_1
    \leq 2 \OPT+O(\eps).
\]

The same proposal samples provide bounded, unbiased estimates $\hat D_{t,i}$ of $D_{t,i}$, including a correction for the mass of $U_t$.
Azuma-Hoeffding inequality shows that, with high probability, for every input distribution, the average estimate is close to the average distance, namely $|\overline{\hat D}_{i}-\overline D_{i}|\leq O(\eps)$.
The algorithm returns the input distribution $H_{\hat i}$ with the smallest estimated average distance to the median measures.
Uniform concentration of the distance estimates and the choice of $\hat i$ imply
\[
    \overline D_{\hat i}
    \leq\overline{\hat D}_{\hat i}+O(\eps)
    \leq\overline{\hat D}_{i^*}+O(\eps)
    \leq\overline D_{i^*}+O(\eps)
    \leq\OPT+O(\eps).
\]
Averaging the triangle inequality through $U_t$ gives
\[
    \TV(P,H_{\hat i})
    \leq\frac1T\sum_{t=1}^T\frac12\|P-U_t\|_1+\overline D_{\hat i}
    \leq3\OPT+O(\eps).
\]
Tracking the constants yields $3\OPT+\eps$ with probability at least $1-\delta$ after $T=O(\eps^{-2}\log(n/\delta))$ rounds.
Each round evaluates the input densities and computes weighted medians only at the two sampled points, using $O(n)$ time and one target sample.

\section{Preliminaries}
\label{sec:preliminaries}

\subsection{Problem setting and computational model}
\label{sec:model}

Let $H_1,\ldots,H_n$ be probability distributions on a measurable space $(\mathcal X,\mathcal B)$, and let $P$ be the unknown target distribution on the same space.
Fix a common $\sigma$-finite dominating measure $\mu$ for $P,H_1,\ldots,H_n$, and write
\[
    p=\frac{dP}{d\mu},\qquad h_i=\frac{dH_i}{d\mu}\quad(i\in[n]).
\]
Here $h_i$ is the density of the distribution $H_i$ relative to $\mu$.
The target density $p$ is used only as mathematical notation and is not available to the algorithm.
We access $P$ through a sample oracle, and each $H_i$ through a sample oracle and a density oracle that evaluates $h_i(x)$ at any supplied point $x\in\mathcal X$.
Every sample-oracle call returns a fresh independent draw from the requested measure.

For accuracy $0<\eps\le1$ and failure probability $0<\delta<1$, our goal is to return a distribution $H_{\hat i}\in \{H_1,\ldots,H_n\}$ such that
\[
    \TV(P,H_{\hat i})\leq3\OPT+\eps,
    \qquad \OPT=\min_{i\in[n]}d_{\mathrm{TV}}(P,H_i),
\]
with probability at least $1-\delta$ over the samples and the algorithm's internal randomness.
The selection is \emph{proper} because the output must be one of the input distributions.

We use an exact-real oracle model: sample-oracle calls, density evaluations, arithmetic operations (addition, subtraction, multiplication, or division), comparisons, ceiling operations, and draws from uniform distributions each take unit time.
The algorithm's running time is the total number of such operations.
The sample complexity only counts the number of samples drawn from the target distribution $P$.

\subsection{Notation and density conventions}

Write $[n]=\{1,\ldots,n\}$, use $\ln$ for the natural logarithm and $\log$ for the base-two logarithm, and set $z_+=\max\{z,0\}$.
Throughout the paper, fix $i^*\in\argmin_{i\in[n]}d_{\mathrm{TV}}(P,H_i)$, breaking ties by a fixed rule.

Uppercase symbols denote measures and their lowercase counterparts denote densities relative to $\mu$.
In particular, $m$, $m'$, and $u$ denote the densities of the mixtures $M$, $M'$, and the median measure $U$ introduced in \cref{sec:algorithm}.

Since $h_i$ is the density of $H_i$ with respect to $\mu$, $H_i(E)=\int_E h_i\,d\mu$ denotes the probability of a measurable set $E\in\mathcal B$.
The same convention applies to $M$, $M'$, and $U$, although $U$ need not have mass one.

\subsection{Total variation and finite measures}
For a finite signed measure $V$ on $(\cX, \cB)$, write $\|V\|_1=|V|(\mathcal X)$ for its total variation norm.
If $V$ has density $v$ relative to $\mu$, then $\|V\|_1=\int|v|\,d\mu$.
For probability measures $A,B$, we use
\[
    d_{\mathrm{TV}}(A,B)=\sup_{E\in\mathcal B}|A(E)-B(E)|
        =\frac12\|A-B\|_1.
\]
For finite nonnegative measures $A,B$ and any measurable function $f$ with $|f|\leq1$, the total variation norm gives
\begin{equation}
\label{eq:bounded-test}
    \left|\int f\,d(A-B)\right|\leq\|A-B\|_1.
\end{equation}
We also use the triangle inequality for this norm, which applies even when the intermediate measure is not a probability measure.
In particular, for probability measures $A,B$ and a finite nonnegative measure $U$,
\[
    d_{\mathrm{TV}}(A,B)
        \leq\frac12\|A-U\|_1+\frac12\|U-B\|_1.
\]
When $U(\mathcal X)\neq1$, the quantity $\frac12\|U-A\|_1$ denotes half the total variation norm, rather than the total variation distance between two probability measures.

A useful identity accounts explicitly for the mass of such a measure.
If $U$ has density $u$ and mass $U(\mathcal X)=\int u\,d\mu$, then
\begin{align}
\label{eq:finite-measure-distance}
    \frac12\|U-H_i\|_1
        &=\frac12\int|u-h_i|\,d\mu \notag\\
        &=\int(u-h_i)_+\,d\mu-\frac12\int(u-h_i)\,d\mu \notag\\
        &=\int(u-h_i)_+\,d\mu+\frac{1-U(\mathcal X)}{2}.
\end{align}
Indeed, $|z|=2z_+-z$ and $\int h_i\,d\mu=1$ give this identity directly.
The correction $(1-U(\mathcal X))/2$ is needed because the median measures used by the algorithm are not normalized.

\section{Fast proper hypothesis selection}
In this section, we present a fast proper hypothesis selection algorithm, \cref{alg:proper}, and prove \cref{thm:proper}, which is restated below for convenience.
\cref{sec:overview} provides an overview of the algorithm.

\begin{theorem}[\cref{thm:proper} restated]
\label{thm:proper-restated}
Let $n\geq1$, $0<\eps\leq1$, and $0<\delta<1$. 
With probability at least $1-\delta$, \cref{alg:proper} returns a distribution $H_{\hat i}\in\cH$ which satisfies 
\[
    \TV(P,H_{\hat i})\leq3\OPT+\eps.
\]
The algorithm uses $O(\eps^{-2}\log(n/\delta))$ target samples, and runs in $O(n\eps^{-2}\log(n/\delta))$ time.
\end{theorem}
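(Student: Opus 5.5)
We follow the route of \cref{sec:overview}. First we make the algorithm concrete. At round $t$ the weights $\bw_t$ define the lower weighted median density $u_t(x)$. The balanced tests $\sigma_{t,i}(x)\in[-1,1]$ equal $+1$ when $h_i(x)>u_t(x)$ and $-1$ when $h_i(x)<u_t(x)$. At ties they take fractional values chosen so that $\sum_i w_{t,i}\sigma_{t,i}(x)=0$; this is possible because $u_t$ is a weighted median.

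As proposal we take the mixture $M'=\frac12(M+H_{j})$, or simply $M=\sum_i w_{t,i}H_i$. The key pointwise fact is $u_t(x)\le 2m_t(x)$: at least half the weight sits on indices with $h_i(x)\ge u_t(x)$, so $m_t(x)\ge \frac12 u_t(x)$. The importance weight $u_t/m_t$ is therefore bounded by $2$.

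Each round draws $Y_t\sim P$ and $Z_t\sim M_t$ and sets
\[
    g_{t,i}=\tfrac12\bigl(\sigma_{t,i}(Z_t)\,u_t(Z_t)/m_t(Z_t)-\sigma_{t,i}(Y_t)\bigr).
\]
This is bounded by a constant, and $\sum_i w_{t,i}g_{t,i}=0$ holds exactly. Hence the update $w_{t+1,i}=w_{t,i}(1+\eta g_{t,i})$ preserves total mass, and positivity holds if $\eta\le 1/4$.

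The second step is the key inequality $D_{t,i}\le \mathbb E[g_{t,i}\mid\cF_{t-1}]+\TV(P,H_i)$, up to the sign convention, which we fix so that this holds. Conditionally on $\cF_{t-1}$,
\[
    \mathbb E[g_{t,i}\mid\cF_{t-1}]=\tfrac12\int\sigma_{t,i}\,(u_t-p)\,d\mu .
\]
Since $\sigma_{t,i}$ is the sign of $h_i-u_t$ off ties, and ties contribute zero to $\int \sigma_{t,i}(h_i-u_t)\,d\mu$, we get $\frac12\int\sigma_{t,i}(h_i-u_t)\,d\mu=D_{t,i}$. Therefore
\[
    D_{t,i}=\tfrac12\int\sigma_{t,i}(h_i-p)\,d\mu+\tfrac12\int\sigma_{t,i}(p-u_t)\,d\mu .
\]
By \eqref{eq:bounded-test}, the first term is at most $\TV(P,H_i)$. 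With the sign of $g$ chosen as $\frac12\int\sigma(p-u_t)$, i.e.\ $g_{t,i}=\frac12(\sigma(Y_t)-\sigma(Z_t)u_t(Z_t)/m_t(Z_t))$, the second term equals the conditional gain.

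The third step is the regret bound. From $\ln(1+z)\ge z-z^2$ and $\sum_i w_{t,i}=1$ for all $t$, we get $\eta\sum_t g_{t,i}-\eta^2\sum_t g_{t,i}^2\le \ln n$. With $\eta\asymp\eps$ this gives $\frac1T\sum_t g_{t,i}\le O(\eps)$ once $T\gtrsim \eps^{-2}\ln n$. Azuma--Hoeffding on the bounded martingale differences $g_{t,i}-\mathbb E[g_{t,i}\mid\cF_{t-1}]$, with a union bound over $i$, converts this to the conditional gains. This yields $\overline D_i\le\TV(P,H_i)+O(\eps)$ for all $i$.

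For the estimates, by \eqref{eq:finite-measure-distance} we set
\[
    \hat D_{t,i}=\frac{(u_t(Z_t)-h_i(Z_t))_+}{m_t(Z_t)}+\frac12-\frac{u_t(Z_t)}{2m_t(Z_t)} .
\]
This is unbiased for $D_{t,i}$, and it is bounded because $(u_t-h_i)_+\le u_t\le 2m_t$. A second Azuma--Hoeffding argument with a union bound handles these estimates. The chain of inequalities in \cref{sec:overview} then gives $3\OPT+O(\eps)$. Rescaling $\eps$ by a constant and splitting $\delta$ between the $2n$ martingales finishes the accuracy claim.

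The complexity claims are direct. We use $T=O(\eps^{-2}\log(n/\delta))$ rounds with one target sample each. Sampling from $M_t$ takes $O(n)$ time: pick an index by weight, then sample that $H_i$. Each round evaluates $n$ densities at two points and computes the weighted median by linear-time selection, which is $O(n)$. The final argmin costs $O(n)$.

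\textbf{Main obstacle.} The main obstacle is the bookkeeping at ties in $\sigma$: it must simultaneously give exact balance, keep $|\sigma|\le1$, and preserve the identity with $D_{t,i}$. A related obstacle is ensuring that the importance ratio is bounded at points where $m_t=0$; there $u_t=0$, so the ratio should be set to $0$.
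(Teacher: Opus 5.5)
Your proposal is correct and follows the paper's proof step for step. Specifically, it uses:
\begin{itemize}
\item the same balanced tests (your $\sigma_{t,i}$ is the negative of the paper's, and you flip the gain to match, so nothing changes);
\item the same decomposition of $D_{t,i}$ into a conditional gain plus a term bounded by \eqref{eq:bounded-test};
\item the same mass-preserving multiplicative update and regret bound;
\item the same unbiased estimator built from \eqref{eq:finite-measure-distance};
\item Azuma--Hoeffding with a union bound, and the same final chain through $i^*$ and $U_t$.
\end{itemize}

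The one substantive difference is the proposal distribution. You sample from the exact mixture $M_t$. This gives the cleaner bound $u_t/m_t\le 2$, but it assumes you can draw an index with probability exactly $w_{t,i}$. After the updates these weights are arbitrary reals, and the paper explicitly avoids sampling from them. It instead rounds to dyadic weights $w'_{t,i}=a_i/(2N)$ with $w'_{t,i}\ge w_{t,i}/2$. Then $M'_t$ can be sampled exactly in $O(n)$ worst-case time from a single draw of $\mathrm{Unif}([2N])$. The cost is the weaker ratio bound $u_t/m'_t\le 4$, which gives $|g_{t,i}|\le 5/2$ and an estimator range of length $4$. Your simpler choice is fine in a real-RAM model with a unit-time continuous uniform draw; otherwise you need the rounding step.

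The statement is also about \cref{alg:proper} with its fixed step size $\eps/50$ and $T=\lceil 400\eps^{-2}\lceil\log(3n/\delta)\rceil\rceil$. So ``rescaling $\eps$'' has to be replaced by tracking those constants. The paper does this: it gets $7\eps/32$ from the regret bound, $5\eps/32$ and $\eps/8$ from concentration, and hence $3\eps/8$ and $5\eps/8$, which sum to exactly $\eps$.
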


\subsection{Weighted medians and balanced median tests}
We now introduce the mixture of the input distributions defined by a weight vector, the median measure constructed from the mixture, and the balanced median tests that compare the median with each input distribution.
All of these objects are updated in each round of the algorithm, and the round subscript $t$ is omitted in this section.

Let $\bw=(w_1,\ldots,w_n)$ be a vector of positive weights with $\sum_i w_i=1$, and let $M=\sum_iw_iH_i$ be the corresponding mixture of input distributions.
Then $M$ is a probability measure with density $m=\sum_iw_ih_i$.

Let $u(x)$ be the \emph{lower weighted median} of $h_1(x),\ldots,h_n(x)$: the smallest value whose cumulative weight is at least $1/2$.
The function $u$ is measurable, since it is selected from finitely many measurable density functions using measurable comparisons.
Define the \emph{median measure} $U$ by $dU=u\,d\mu$.
Its mass $U(\mathcal X)$ need not be one, so we do not normalize it.

Let $W_<(x)=\sum_{i:h_i(x)<u(x)}w_i$ be the total weight on density values below $u(x)$, with $W_=(x)$ and $W_>(x)$ defined analogously. 
Define the \emph{balanced median tests} $\sigma_i(x)$ by
\begin{equation}
\label{eq:signed-tests}
    \sigma_i(x)=
    \begin{cases}
        1, & h_i(x)<u(x),\\
        \dfrac{W_>(x)-W_<(x)}{W_=(x)}, & h_i(x)=u(x),\\
        -1, & h_i(x)>u(x).
    \end{cases}
\end{equation}
The tests are measurable signed functions that indicate whether $h_i(x)$ is below or above the median, with a fractional tie value that exactly balances the total weight on each side.
Replacing the tie value by a fixed sign or by zero does not preserve $\sum_iw_i\sigma_i(x)=0$ in general, which is crucial for the multiplicative weight updates.
The denominator $W_=(x)$ is positive because the median is an input value and all weights are positive.

\begin{lemma}[Weighted-median identities]
\label{lem:weighted-median}
For every $x\in\mathcal X$ and $i\in[n]$,
\begin{align*}
    |\sigma_i(x)|&\leq1,\\
    \sum_iw_i\sigma_i(x)&=0,\\
    \sigma_i(x)(u(x)-h_i(x))&=|u(x)-h_i(x)|,\\
    0\leq u(x)&\leq2m(x).
\end{align*}
In particular, $U$ is a finite nonnegative measure with $U(\mathcal X)\leq2$.
\end{lemma}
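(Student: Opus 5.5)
The plan is to fix a point $x$ and argue pointwise, writing $W_<,W_=,W_>$ for $W_<(x),W_=(x),W_>(x)$, so that $W_<+W_=+W_>=1$. The only input is the defining property of the lower weighted median: $u(x)$ is the smallest value $v$ among $h_1(x),\ldots,h_n(x)$ with $\sum_{i:h_i(x)\le v}w_i\ge 1/2$. From this I will extract two inequalities. First, $W_<+W_=\ge 1/2$, since $u(x)$ itself qualifies. Second, $W_<<1/2$: every value strictly below $u(x)$ fails the cumulative-weight condition, and the largest input value below $u(x)$ carries cumulative weight exactly $W_<$ (if there is no such value then $W_<=0$). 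Together these give $W_>=1-W_<-W_=\le 1/2$. Also $W_=>0$, as the paper already notes, because $u(x)$ is one of the $h_i(x)$ and all weights are positive.

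For $|\sigma_i(x)|\le1$, only the tie case needs checking, and I need $|W_>-W_<|\le W_=$. The bound $W_>-W_<\le W_=$ is equivalent to $1-2W_<-W_=\le W_=$, i.e.\ $W_<+W_=\ge1/2$. The bound $W_<-W_>\le W_=$ is equivalent to $W_<\le W_>+W_=$, i.e.\ $2W_<\le 1$, which holds because $W_<<1/2$. The balance identity is a direct computation:
\[
\sum_iw_i\sigma_i(x)=W_<-W_>+W_=\cdot\frac{W_>-W_<}{W_=}=0.
\]
The identity $\sigma_i(x)(u(x)-h_i(x))=|u(x)-h_i(x)|$ follows by cases. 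If $h_i(x)<u(x)$, then $\sigma_i(x)=1$ and $u(x)-h_i(x)>0$. If $h_i(x)>u(x)$, then $\sigma_i(x)=-1$ and $u(x)-h_i(x)<0$. In the tie case both sides vanish.

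For $0\le u(x)\le 2m(x)$, nonnegativity holds because $u(x)$ equals some density value $h_i(x)\ge0$. For the upper bound I will use the set $\{i:h_i(x)\ge u(x)\}$, whose weight is $W_=+W_>=1-W_>1/2$. Hence
\[
m(x)\ge\sum_{i:h_i(x)\ge u(x)}w_ih_i(x)\ge u(x)\,(1-W_<)\ge u(x)/2.
\]
Integrating this bound gives $U(\mathcal X)=\int u\,d\mu\le 2\int m\,d\mu=2$. Since $u$ is measurable and nonnegative, as already noted in the text, $U$ is therefore a finite nonnegative measure.

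The main obstacle is the strict inequality $W_<<1/2$. Proving it requires stating the minimality in the median definition carefully, since "smallest value" is taken over the finitely many input values. With that in place, the rest consists of short case checks.
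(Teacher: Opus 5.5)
Your proof is correct and follows the same pointwise route as the paper. You extract $W_<(x)<1/2$ and $W_>(x)\le 1/2$ (equivalently $W_<(x)+W_=(x)\ge 1/2$) from the lower-median definition, check the tie value, sum to get the balance identity, check signs case by case, and bound $m(x)\ge u(x)\,(1-W_<(x))\ge u(x)/2$ before integrating. Your justification of $W_<(x)<1/2$ through the largest input value below $u(x)$ is more explicit than the paper's. The garbled expression for the weight of $\{i:h_i(x)\ge u(x)\}$ should read $W_=(x)+W_>(x)=1-W_<(x)>1/2$, which is what your display already uses.
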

\begin{proof}[Proof of \cref{lem:weighted-median}]
Fix $x\in\mathcal X$.
The lower-median definition gives $W_<(x)<1/2$ and $W_>(x)\leq1/2$.
Since $W_=(x)=1-W_<(x)-W_>(x)$, these inequalities imply
\[
    -W_=(x)\leq W_>(x)-W_<(x)\leq W_=(x).
\]
Thus the tie value lies in $[-1,1]$, proving the first identity.

Summing the three cases of \cref{eq:signed-tests} gives
\[
    \sum_iw_i\sigma_i(x)
    =W_<(x)-W_>(x)+W_=(x)\frac{W_>(x)-W_<(x)}{W_=(x)}=0.
\]
The third identity follows directly from the sign of $u(x)-h_i(x)$.

Finally, at least half the weight lies on values at least $u(x)$, so
\[
    m(x) 
    =\sum_{i\in [n]}w_ih_i(x)
    \geq u(x)\sum_{i:h_i(x)
    \geq u(x)}w_i
    \geq\frac{u(x)}2.
\]
Integrating gives $U(\mathcal X)\leq 2M(\cX)=2$ since $M$ is a probability measure.
\end{proof}

\subsection{Rounded weights and proposal distribution}
In each round, the algorithm compares the median measure with the target distribution to guide the weight updates. 
Sampling from the target is straightforward, but the median measure is not normalized and is difficult to sample from directly. 
We therefore introduce a proposal distribution which is easy to sample from and use importance weights to correct for the difference between the proposal and the median measure.
The subscript $t$ is also omitted in this section.

Directly sampling from the mixture $M$ is also difficult, since the mixture weights are arbitrary real numbers.
We thus round the weights to dyadic probabilities, which allows exact sampling from the rounded mixture $M'$ in linear time, using a uniform integer draw and a sample from one of the input distributions.

Choose the least power of two $N\geq n$, so $n\leq N<2n$, once for the entire algorithm.
Given $\bw$, first set $a_i=\lceil Nw_i\rceil$.
Increase $a_1$ by $2N-\sum_i a_i$, so that the final integer weights sum to $2N$, and define the rounded weights, mixture, and density by
\begin{equation*}
    w'_i=\frac{a_i}{2N},\qquad
    M'=\sum_iw'_iH_i,\qquad m'(x)=\sum_iw'_i h_i(x).
\end{equation*}

\begin{lemma}[Proposal properties]
\label{lem:proposal}
For every $x\in\mathcal X$ and $i\in[n]$,
\begin{equation*}
    w'_i\geq\frac{w_i}{2},\qquad
    w'_i\geq\frac1{2N},\qquad
    m'(x)\geq\frac{m(x)}{2},\qquad 
    m'(x)\geq \frac{u(x)}{4}.
\end{equation*}
\end{lemma}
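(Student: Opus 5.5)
The plan is to first establish the two lower bounds on $w'_i$ directly from the rounding procedure, and then derive the two density bounds from them together with \cref{lem:weighted-median}. The only subtlety is that $a_1$ is modified after the ceiling step, so I must check that this modification never decreases $a_1$.

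First I will verify that the adjustment to $a_1$ is nonnegative, i.e.\ $\sum_i a_i\le 2N$. Since $\lceil Nw_i\rceil< Nw_i+1$ and $\sum_i w_i=1$, we get $\sum_i a_i< N+n\le 2N$, using $n\le N$. Hence after the adjustment every final integer weight satisfies $a_i\ge\lceil Nw_i\rceil$. This is the step I expect to be the main (if mild) obstacle, since it is where the choice $N\ge n$ and the factor $2N$ in the denominator are used. It also shows the $w'_i$ are valid probabilities summing to one.

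With this in hand, $w'_i=a_i/(2N)\ge \lceil Nw_i\rceil/(2N)\ge Nw_i/(2N)=w_i/2$. Also, since $w_i>0$, we have $\lceil Nw_i\rceil\ge 1$, so $w'_i\ge 1/(2N)$. These give the first two inequalities.

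For the density bounds, I will use the nonnegativity of every $h_i(x)$. This gives $m'(x)=\sum_i w'_ih_i(x)\ge \sum_i \tfrac{w_i}{2}h_i(x)=m(x)/2$. Then the last part of \cref{lem:weighted-median}, namely $u(x)\le 2m(x)$, yields $m'(x)\ge m(x)/2\ge u(x)/4$. All four bounds hold pointwise for every $x\in\mathcal X$ and $i\in[n]$, which completes the proof.
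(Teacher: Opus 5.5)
Your proposal is correct and follows essentially the same argument as the paper. You bound $\sum_i a_i < N+n \le 2N$ so that the adjustment to $a_1$ is nonnegative, which gives $a_i \ge \max\{Nw_i, 1\}$, and then you derive the density bounds by summing $w'_i h_i(x) \ge w_i h_i(x)/2$ and applying $u \le 2m$ from \cref{lem:weighted-median}.
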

\begin{proof}[Proof of \cref{lem:proposal}]
Before increasing $a_1$, positivity of the weights gives $\max\{Nw_i,1\}\leq a_i<Nw_i+1$, and hence $\sum_i a_i<N+n\leq2N$.
The remaining integer mass is nonnegative, and adding it to $a_1$ preserves $a_i\geq Nw_i$ and $a_i\geq1$.
Dividing by $2N$ gives the first two inequalities.

Multiplying $w'_i\geq w_i/2$ by $h_i(x)$ and summing gives $m'(x)\geq m(x)/2$. Together with $0\leq u(x)\leq2m(x)$ from \cref{lem:weighted-median}, we get $m'(x)\geq u(x)/4$.
\end{proof}

\paragraph{Sampling from $M'$.} 
Draw $I\sim\mathrm{Unif}([2N])$ in unit time.
Scan the cumulative sums $\sum_{k\leq i}a_k$ to find the unique index $i$ such that $\sum_{k<i}a_k<I\leq\sum_{k\leq i}a_k$, and return a fresh sample from $H_i$.
The index equals $i$ with probability $a_i/(2N)=w'_i$, so the procedure returns a sample from $M'$.
The sampling time is $O(n)$: the uniform draw and candidate-sample query each take unit time, and the cumulative scan takes $O(n)$ time.

\subsection{The iterative procedure}
\label{sec:iterative-procedure}
We now describe the iterative procedure, in which we use the balanced median tests to guide how to update the weights such that the new median measure is closer to the target.

Define the number of rounds
\begin{equation*}
    T=\left\lceil400\eps^{-2}\left\lceil\log\frac{3n}{\delta}\right\rceil\right\rceil.
\end{equation*}
Let $(\cF_t)_{t=0}^T$ be the filtration generated by the random choices in the algorithm, with $\cF_0$ containing the fixed input.
At round $t$, the weights $\bw_t=(w_{t,1},\ldots,w_{t,n})$, median measure $U_t$, tests $\sigma_{t,i}$, and proposal $M'_t$ are fixed.
Draw independent fresh samples $X_t\sim M'_t$ and $Y_t\sim P$, and define
\begin{align}
    \rho_t&=\frac{u_t(X_t)}{m'_t(X_t)},\label{eq:importance-weight}\\
    g_{t,i}&=\frac12\bigl(\rho_t\sigma_{t,i}(X_t)-\sigma_{t,i}(Y_t)\bigr),
        \label{eq:gain}\\
    \hat D_{t,i}&=\frac{(u_t(X_t)-h_i(X_t))_+}{m'_t(X_t)}
                    +\frac{1-\rho_t}{2}.\label{eq:distance}
\end{align}
Since every $w'_{t,i}>0$, if $m'_t(x)=0$ then all $h_i(x)$ and $u_t(x)$ vanish.
Ratios with denominator $m'_t(x)$ are defined to be zero.
This convention preserves the integral identities, and $M'_t$ assigns that set probability zero.
The expectation calculations in \cref{sec:conditional-expectations} make this explicit.

\paragraph{Gains and distance estimates.}
$\rho_t$ is the importance weight that corrects for the difference between the proposal and the median measure.
\cref{sec:conditional-expectations} proves that
\begin{equation*}
    \mathbb E[g_{t,i}\mid\mathcal F_{t-1}]
        =\frac12\int\sigma_{t,i}\,d(U_t-P),
    \qquad
    \mathbb E[\hat D_{t,i}\mid\mathcal F_{t-1}]
        =\frac12\|U_t-H_i\|_1.
\end{equation*}
By the definition of the tests, a positive gain $g_{t,i}$ suggests that $H_i$ lies, on average, on the same side of the median as the target, so increasing the weight of $H_i$ would move the median measure closer to the target, while a negative gain suggests the opposite.
Starting from uniform weights, we update the weights of the input distributions based on the gains.
It'll be shown that this update rule ensures that the average distance between the median measures and the target distribution across all rounds is small.

The distance estimates $\hat D_{t,i}$ are collected in each round to estimate the average distance between the median measures and each input distribution. 
Since $0\leq(u_t-h_i)_+\leq u_t\leq4m'_t$, it'll be shown that the distance estimate $\hat D_{t,i}$ lies in an interval of length four, independently of $n$ and the current weights.
This bound is crucial for the concentration analysis via the Azuma-Hoeffding inequality in \cref{lem:gain-distance-concentration}.
In contrast, the direct estimate $|u_t(X_t)-h_i(X_t)|/(2m'_t(X_t))$ introduces the ratio $h_i(X_t)/m'_t(X_t)$, whose bound $1/w'_{t,i}\leq2N$ depends on $n$.

\paragraph{Each round takes linear time.}
The algorithm represents $U_t$ implicitly through the weights and the densities of the input distributions.
In each round, it computes the weighted median and tests only at $X_t$ and $Y_t$.
Using the idea of median-of-medians~\cite{blum1973time}, \cref{sec:find-weighted-median} shows that the weighted median can be computed in $O(n)$ time.
Then the tests can be computed in $O(n)$ time by scanning the input densities and accumulating the weights.
Rounding, proposal sampling, density evaluation, and the updates also take $O(n)$ time, giving a total of $O(n)$ per round.

\begin{algorithm}[htbp]
\caption{Proper hypothesis selection via weighted medians}
\label{alg:proper}
\DontPrintSemicolon
\KwIn{Sample oracle for $P$; sample and density oracles for $H_1,\ldots,H_n$; accuracy $\eps$ and failure probability $\delta$.}
\KwOut{An input distribution $H_{\hat i}$.}
Choose $N$ as the least power of two at least $n$; set $T=\left\lceil400\eps^{-2}\left\lceil\log\frac{3n}{\delta}\right\rceil\right\rceil$\;
Initialize $w_{1,i}=1/n$ and $\overline{\hat D}_i=0$ for every $i\in[n]$\;
\For{$t=1,\ldots,T$}{
    Use $\bw_t$ to define $U_t$ and $\sigma_{t,i}$ by \cref{eq:signed-tests}\;
    Construct $\bw'_t$ and $M'_t$ using the rounding procedure of \cref{lem:proposal}\;
    Draw independent fresh samples $X_t\sim M'_t$ and $Y_t\sim P$\;
    Evaluate all $h_i(X_t),h_i(Y_t)$, the two weighted medians and their tests, and $m'_t(X_t)=\sum_iw'_{t,i}h_i(X_t)$\;
    Compute $\rho_t$ by \cref{eq:importance-weight}\;
    \For{$i=1,\ldots,n$}{
        Compute $g_{t,i}$ and $\hat D_{t,i}$ by \cref{eq:gain} and \cref{eq:distance}\;
        $w_{t+1,i}\gets w_{t,i}\bigl(1+(\eps/50)g_{t,i}\bigr)$\;
        $\overline{\hat D}_i\gets \overline{\hat D}_i+\hat D_{t,i}/T$\;
    }
}
Return $H_{\hat i}$ for $\hat i\in\argmin_{i\in[n]}\overline{\hat D}_i$\;
\end{algorithm}

\subsection{Analysis of the algorithm}
In this section, we analyze the performance of \cref{alg:proper} and prove \cref{thm:proper-restated}.
Each round of the algorithm draws one sample from the target distribution, and runs in $O(n)$ time.
The total sample complexity is $T=O(\eps^{-2}\log(n/\delta))$, and the total time complexity is $O(nT)=O(n\eps^{-2}\log(n/\delta))$, as claimed in \cref{thm:proper-restated}.
We now show that the output distribution $H_{\hat i}$ satisfies $\TV(P,H_{\hat i})\leq3\OPT+\eps$ with probability at least $1-\delta$.

\subsubsection{Bounded estimates and validity of the updates}
Suppose that the current weights are positive and sum to one.
By \cref{lem:proposal}, $m'_t\geq u_t/4$, so $0\leq\rho_t\leq4$ and
\[
    0\leq\frac{(u_t(X_t)-h_i(X_t))_+}{m'_t(X_t)}\leq\rho_t.
\]
Together with $|\sigma_{t,i}|\leq1$ from \cref{lem:weighted-median}, these bounds imply
\begin{equation}
\label{eq:estimator-ranges}
    |g_{t,i}|\leq\frac52,\qquad
    -\frac32\leq\hat D_{t,i}\leq\frac52.
\end{equation}

\paragraph{Validity of the update.}
The initial weights are positive and sum to one.
The balance identity $\sum_iw_i\sigma_i(x)=0,$ in \cref{lem:weighted-median} holds pointwise, and therefore
\[
    \sum_iw_{t,i}g_{t,i}
    =\frac12\left(\rho_t\sum_iw_{t,i}\sigma_{t,i}(X_t)
                     -\sum_iw_{t,i}\sigma_{t,i}(Y_t)\right)=0.
\]
Consequently,
\[
    \sum_iw_{t+1,i}
    =\sum_iw_{t,i}+\frac{\eps}{50}\sum_iw_{t,i}g_{t,i}=1.
\]
Moreover, $|(\eps/50)g_{t,i}|\leq\eps/20\leq1/20$, so every update factor is positive.
Induction shows that all weight vectors, medians, and proposals are well defined, and that \cref{eq:estimator-ranges} holds in every round.

With this in place, the following standard multiplicative-weights argument~\cite{arora2012multiplicative} bounds the cumulative gain of each input distribution.

\begin{lemma}[Deterministic gain bound]
\label{lem:gain-bound}
For every $i\in[n]$,
\begin{equation*}
    \frac1T\sum_{t=1}^Tg_{t,i}
    \leq\frac{50\ln n}{\eps T}+\frac{\eps}{8}
    \leq\frac{7\eps}{32}.
\end{equation*}
\end{lemma}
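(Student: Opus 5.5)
We will run the standard potential-function argument for multiplicative weights. Write $\eta=\eps/50$. As shown just before the lemma, every weight vector $\bw_t$ is positive and sums to one, and every update factor $1+\eta g_{t,i}$ is positive. Fix $i\in[n]$. Since the total weight stays exactly one, $w_{T+1,i}\leq1$. Since $w_{1,i}=1/n$,
\[
    w_{T+1,i}=\frac1n\prod_{t=1}^T(1+\eta g_{t,i})\leq1,
\]
and taking logarithms gives $\sum_t\ln(1+\eta g_{t,i})\leq\ln n$. No potential decrease needs to be tracked: the balance identity already makes the total weight invariant, so we only need this single-coordinate bound.

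Next we lower-bound each logarithm. By \cref{eq:estimator-ranges}, $|\eta g_{t,i}|\leq\eta\cdot\frac52=\eps/20\leq1/20$. For $|z|\leq1/2$ we use $\ln(1+z)\geq z-z^2$. This gives
\[
    \ln(1+\eta g_{t,i})\geq\eta g_{t,i}-\eta^2g_{t,i}^2\geq\eta g_{t,i}-\frac{25}{4}\eta^2 .
\]
Summing over $t$, we get $\eta\sum_tg_{t,i}\leq\ln n+\frac{25}{4}\eta^2T$. Dividing by $\eta T$ yields
\[
    \frac1T\sum_tg_{t,i}\leq\frac{\ln n}{\eta T}+\frac{25}{4}\eta=\frac{50\ln n}{\eps T}+\frac{\eps}{8}.
\]
This is the first inequality. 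The only care needed is the choice of quadratic lower bound. With $\ln(1+z)\geq z-z^2$ the constant comes out as $\frac{25}{4}\cdot\frac1{50}=\frac18$ exactly, so no sharper inequality is required.

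For the second inequality we use the choice of $T$. Since $T\geq400\eps^{-2}\lceil\log(3n/\delta)\rceil$, and $\ln n\leq\log n\leq\log(3n/\delta)$ because $\delta<1$, we have
\[
    \frac{50\ln n}{\eps T}\leq\frac{50\,\eps^{2}\log(3n/\delta)}{400\,\eps\,\lceil\log(3n/\delta)\rceil}\leq\frac{\eps}{8}.
\]
Hence the bound is at most $\eps/8+\eps/8=\eps/4$. This overshoots the claimed $7\eps/32$. We therefore need a sharper accounting in either the $\ln n$ term or the quadratic term, and this is the step I expect to be the main obstacle.

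To close the gap, I would first sharpen the first term. We have $\ln n=\log n\cdot\ln2\leq0.7\log(3n/\delta)$. Substituting gives $\frac{50\ln n}{\eps T}\leq\frac{50\cdot0.7}{400}\eps<\frac{3\eps}{32}$, since $35/400=0.0875<0.09375$. Adding the quadratic term then gives a total below $\frac{3\eps}{32}+\frac{4\eps}{32}=\frac{7\eps}{32}$. So using $\ln$ versus $\log_2$ properly suffices, and the remaining steps are routine.
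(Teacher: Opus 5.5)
Your proof is correct and is essentially the paper's argument: you use $w_{T+1,i}\leq1$ from weight conservation, then $\ln(1+z)\geq z-z^2$ with $g_{t,i}^2\leq25/4$, then $\ln n=\ln2\cdot\log n$ together with the choice of $T$. The only difference is cosmetic: you use $\ln2<0.7$ and get $35\eps/400$, where the paper uses $\ln2<3/4$ and gets exactly $3\eps/32$; your first estimate of $\eps/4$ is simply superseded by this sharper step, so it leaves no gap.
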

\begin{proof}[Proof of \cref{lem:gain-bound}]
Since $w_{t+1,i}=w_{t,i}(1+(\eps/50)g_{t,i})$ and $w_{1,i}=1/n$, it holds that
\begin{align*}
    \ln w_{T+1,i}
    &=-\ln n+\sum_{t=1}^T\ln\bigl(1+(\eps/50)g_{t,i}\bigr)\\
    &\geq-\ln n+\frac{\eps}{50}\sum_{t=1}^Tg_{t,i}
                    -\frac{\eps^2}{2500}\sum_{t=1}^Tg_{t,i}^2\\
    &\geq-\ln n+\frac{\eps}{50}\sum_{t=1}^Tg_{t,i}-\frac{\eps^2T}{400},
\end{align*}
where the first inequality uses $\ln(1+x)\geq x-x^2$ for $|x|\leq1/2$, and the second uses $g_{t,i}^2\leq25/4$.
The weights are positive and sum to one, so $0<w_{T+1,i}\leq1$ and $\ln w_{T+1,i}\leq0$.
Rearranging gives
\[
    \frac1T\sum_{t=1}^Tg_{t,i}
    \leq\frac{50\ln n}{\eps T}+\frac{\eps}{8}.
\]
Finally, since $\ln 2<3/4$, we have $\ln n\leq\ln(3n/\delta)\leq3\lceil\log(3n/\delta)\rceil/4$.
Together with
$T\geq400\eps^{-2}\lceil\log(3n/\delta)\rceil$, this gives
\[
    \frac{50\ln n}{\eps T}+\frac{\eps}{8}
        \leq\frac{3\eps}{32}+\frac{\eps}{8}
        =\frac{7\eps}{32}.
\]
\end{proof}

\subsubsection{Conditional expectations}
\label{sec:conditional-expectations}
Recall the filtration $\mathcal F_t$ from \cref{sec:iterative-procedure}.
Define
\[
    D_{t,i}=\frac12\|U_t-H_i\|_1,
    \qquad \overline D_i=\frac1T\sum_{t=1}^T D_{t,i}.
\]
Note that $U_t$ need not have mass one.

\begin{lemma}[Conditional expectations]
\label{lem:distance-estimate-expectations}
For every round $t\in[T]$ and coordinate $i\in[n]$,
\begin{align*}
    D_{t,i}\leq\mathbb E[g_{t,i}\mid\mathcal F_{t-1}]+d_{\mathrm{TV}}(P,H_i),\qquad
    \mathbb E[\hat D_{t,i}\mid\mathcal F_{t-1}]=D_{t,i}.
\end{align*}
\end{lemma}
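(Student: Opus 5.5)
The plan is to compute both conditional expectations exactly, using the independence of $X_t\sim M'_t$ and $Y_t\sim P$ given $\mathcal F_{t-1}$. Conditioning fixes $\bw_t$, and hence $u_t$, $\sigma_{t,i}$ and $m'_t$; these are then deterministic measurable functions. The basic tool is the importance-sampling identity: for any measurable $f$ with $|f|\le C u_t$ (or $|f|\le C\max(u_t,(u_t-h_i)_+)$),
\[
    \mathbb E\Bigl[\frac{f(X_t)}{m'_t(X_t)}\Bigm|\mathcal F_{t-1}\Bigr]=\int_{\{m'_t>0\}}\frac{f}{m'_t}\,m'_t\,d\mu=\int f\,d\mu .
\]
The last equality uses the zero-ratio convention. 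Where $m'_t=0$, every $w'_{t,i}h_i$ vanishes, and since all $w'_{t,i}>0$ by \cref{lem:proposal}, all $h_i$ and $u_t$ vanish there too, so $f=0$ on that set. Integrability holds because $\rho_t\le4$ by \cref{lem:proposal} and $U_t(\mathcal X)\le2$ by \cref{lem:weighted-median}.

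\textbf{Distance estimate.} Apply the identity with $f=(u_t-h_i)_+$ and with $f=u_t$. This gives
\[
    \mathbb E[\hat D_{t,i}\mid\mathcal F_{t-1}]=\int(u_t-h_i)_+\,d\mu+\frac{1-U_t(\mathcal X)}{2}.
\]
By \cref{eq:finite-measure-distance}, the right-hand side equals $\frac12\|U_t-H_i\|_1=D_{t,i}$.

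\textbf{Gain.} Apply the identity with $f=u_t\sigma_{t,i}$, which is bounded by $u_t$ since $|\sigma_{t,i}|\le1$. Together with $\mathbb E[\sigma_{t,i}(Y_t)\mid\mathcal F_{t-1}]=\int\sigma_{t,i}p\,d\mu$, this yields
\[
    \mathbb E[g_{t,i}\mid\mathcal F_{t-1}]=\tfrac12\int\sigma_{t,i}\,d(U_t-P).
\]
Next, split $U_t-P=(U_t-H_i)+(H_i-P)$. The third identity of \cref{lem:weighted-median} gives
\[
    \tfrac12\int\sigma_{t,i}(u_t-h_i)\,d\mu=\tfrac12\int|u_t-h_i|\,d\mu=D_{t,i}.
\]
The remaining term is $\tfrac12\int\sigma_{t,i}\,d(H_i-P)$. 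By \cref{eq:bounded-test} with $|\sigma_{t,i}|\le1$, it is at least $-\tfrac12\|H_i-P\|_1=-d_{\mathrm{TV}}(P,H_i)$. Rearranging gives the claimed inequality.

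\textbf{Main obstacle.} The only delicate point is measure-theoretic bookkeeping: justifying the change of measure on the null set $\{m'_t=0\}$, and confirming that the $\sigma_{t,i}$ are measurable and that $u_t,h_i$ are integrable so that every expectation is finite. Measurability of $\sigma_{t,i}$ follows because it is built from finitely many measurable comparisons and ratios with $W_=>0$. Integrability follows from the bounds already established.
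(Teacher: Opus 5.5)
Your proposal is correct and follows essentially the same route as the paper. Both compute the conditional expectations by importance sampling under $M'_t$, using the zero-ratio convention on the $M'_t$-null set $\{m'_t=0\}$; both obtain $\mathbb E[\hat D_{t,i}\mid\mathcal F_{t-1}]=D_{t,i}$ from the mass-corrected identity for $\frac12\|U_t-H_i\|_1$; and both derive the gain inequality from the split $U_t-P=(U_t-H_i)+(H_i-P)$, the sign identity $\sigma_{t,i}(u_t-h_i)=|u_t-h_i|$, and the bounded-test inequality, with your measurability and integrability remarks only making the paper's bookkeeping slightly more explicit.
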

\begin{proof}[Proof of \cref{lem:distance-estimate-expectations}]
Conditional on $\mathcal F_{t-1}$, the current weights, median, tests, and proposal are fixed, and $X_t\sim M'_t$ and $Y_t\sim P$ are fresh and independent.
By the sign identity $\sigma_i(x)(u_t(x)-h_i(x))=|u_t(x)-h_i(x)|$ in \cref{lem:weighted-median},
\begin{align}
    D_{t,i}
    &=\frac12\int|u_t(x)-h_i(x)|\,d\mu(x)\notag\\
    &=\frac12\int\sigma_{t,i}\,d(U_t-H_i)\notag\\
    &=\frac12\int\sigma_{t,i}\,d(U_t-P)+\frac12\int\sigma_{t,i}\,d(P-H_i).
    \label{eq:gain-integral}
\end{align}
Importance sampling gives
\begin{align*}
    \mathbb E[\rho_t\sigma_{t,i}(X_t)\mid\mathcal F_{t-1}]
        =\int_{\{x:m'_t(x)>0\}}
            \frac{u_t(x)\sigma_{t,i}(x)}{m'_t(x)}m'_t(x)\,d\mu(x)
        =\int\sigma_{t,i}\,dU_t.
\end{align*}
On the set $\{x:m'_t(x)=0\}$, which has measure zero under $M'_t$, we have already defined the ratio $\rho_t=u_t(x)/m'_t(x)$ to be zero. The expectation can therefore be written as an integral over ${x:m'_t(x)>0}$. After canceling $m'_t(x)$ on this set, we obtain the last equality by extending the integral to all of $\mathcal X$, since $u_t(x)=0$ wherever $m'_t(x)=0$.
Since $\mathbb E[\sigma_{t,i}(Y_t)\mid\mathcal F_{t-1}]=\int\sigma_{t,i}\,dP$, we obtain
\begin{equation}
\label{eq:gain-expectation}
    \mathbb E[g_{t,i}\mid\mathcal F_{t-1}]
        =\frac12\int\sigma_{t,i}\,d(U_t-P).
\end{equation}
Combining \cref{eq:gain-integral,eq:gain-expectation} and applying \cref{eq:bounded-test} yields
\begin{align*}
    D_{t,i}
    &=\mathbb E[g_{t,i}\mid\mathcal F_{t-1}]
        +\frac12\int\sigma_{t,i}\,d(P-H_i)\\
    &\leq\mathbb E[g_{t,i}\mid\mathcal F_{t-1}]+d_{\mathrm{TV}}(P,H_i).
\end{align*}

For the distance estimate, \cref{eq:finite-measure-distance} gives
\begin{equation*}
    D_{t,i}=\int(u_t(x)-h_i(x))_+\,d\mu(x)
        +\frac{1-U_t(\mathcal X)}{2}.
\end{equation*}
Both $u_t$ and $(u_t-h_i)_+$ vanish wherever $m'_t=0$, so the same importance-sampling calculation yields
\begin{align*}
    \mathbb E\!\left[\frac{(u_t(X_t)-h_i(X_t))_+}{m'_t(X_t)}
        \,\middle|\,\mathcal F_{t-1}\right]
        &=\int(u_t(x)-h_i(x))_+\,d\mu(x),\\
    \mathbb E[\rho_t\mid\mathcal F_{t-1}]&=U_t(\mathcal X).
\end{align*}
Substituting into \cref{eq:distance} proves
\begin{equation*}
    \mathbb E[\hat D_{t,i}\mid\mathcal F_{t-1}]=D_{t,i}.
\end{equation*}
\end{proof}

\subsubsection{Concentration bounds}
We now show that the average gains and distance estimates concentrate around their conditional expectations.
The weights, weighted medians and proposals depend on previous samples, so we use a conditional form of the Azuma--Hoeffding inequality~\cite{azuma1967weighted,hoeffding1963probability}.

\begin{lemma}[Azuma--Hoeffding inequality]
\label{lem:concentration}
Let $(\cF_t)_{t=0}^T$ be a filtration and let $(\xi_t)_{t=1}^T$ be a sequence of random variables, each $\mathcal F_t$-measurable with $\mathbb E[\xi_t\mid\mathcal F_{t-1}]=0$.
Suppose that $a_t\leq\xi_t\leq b_t$ almost surely, where $a_t,b_t$ are $\mathcal F_{t-1}$-measurable and $b_t-a_t\leq c$ for a constant $c>0$.
Then for $s>0$,
\begin{equation*}
    \Pr\!\left\{\frac1T\sum_{t=1}^T\xi_t>s\right\}
    \leq\exp\!\left(-\frac{2Ts^2}{c^2}\right).
\end{equation*}
The same bound holds for the lower tail.
\end{lemma}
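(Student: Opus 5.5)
We prove the Azuma--Hoeffding bound by the usual Chernoff argument: exponentiate the sum, peel off one martingale increment at a time with Hoeffding's lemma applied conditionally, and optimize the free parameter. Fix $\lambda>0$ and set $S_k=\sum_{t=1}^k\xi_t$. Markov's inequality gives
\[
    \Pr\!\left\{\frac1T S_T>s\right\}\leq e^{-\lambda Ts}\,\mathbb E\bigl[e^{\lambda S_T}\bigr],
\]
so it suffices to show $\mathbb E[e^{\lambda S_T}]\leq\exp(\lambda^2Tc^2/8)$.

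The key step is the conditional Hoeffding lemma. Suppose $\xi$ satisfies $\mathbb E[\xi\mid\mathcal G]=0$ and $a\leq\xi\leq b$, where $a,b$ are $\mathcal G$-measurable. Then almost surely
\[
    \mathbb E[e^{\lambda\xi}\mid\mathcal G]\leq\exp\!\bigl(\lambda^2(b-a)^2/8\bigr).
\]
We prove it as follows.
\begin{itemize}
\item By convexity of $z\mapsto e^{\lambda z}$ on $[a,b]$, we have $e^{\lambda\xi}\leq\frac{b-\xi}{b-a}e^{\lambda a}+\frac{\xi-a}{b-a}e^{\lambda b}$.
\item Taking conditional expectations, and using that $a,b$ are $\mathcal G$-measurable and $\mathbb E[\xi\mid\mathcal G]=0$, gives the bound $\frac{b}{b-a}e^{\lambda a}-\frac{a}{b-a}e^{\lambda b}$. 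This equals $e^{L(h)}$ with $h=\lambda(b-a)$, $p=-a/(b-a)\in[0,1]$, and $L(h)=-hp+\ln(1-p+pe^h)$.
\item We have $L(0)=L'(0)=0$ and $L''\leq1/4$, since $L''$ is the variance of a Bernoulli-type quantity. Taylor's theorem then gives $L(h)\leq h^2/8$.
\item If $b=a$, then $\xi=0$ almost surely and the bound is trivial.
\end{itemize}
The only real subtlety is that $a_t,b_t$ are random. Because they are $\mathcal F_{t-1}$-measurable, they are frozen under conditioning on $\mathcal F_{t-1}$, so the deterministic proof applies pointwise; formally one can use a regular conditional distribution, or simply note that the convexity inequality holds pointwise before conditioning. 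After that, using $b_t-a_t\leq c$ yields the uniform bound $\exp(\lambda^2c^2/8)$.

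Next we iterate by the tower property. Since $S_{T-1}$ is $\mathcal F_{T-1}$-measurable,
\[
    \mathbb E[e^{\lambda S_T}]=\mathbb E\bigl[e^{\lambda S_{T-1}}\,\mathbb E[e^{\lambda\xi_T}\mid\mathcal F_{T-1}]\bigr]\leq e^{\lambda^2c^2/8}\,\mathbb E[e^{\lambda S_{T-1}}].
\]
Inducting on $T$ gives $\mathbb E[e^{\lambda S_T}]\leq e^{\lambda^2Tc^2/8}$. The exponentials are nonnegative, so all expectations are well defined even if infinite a priori; boundedness makes them finite in any case. Hence the probability is at most $\exp(-\lambda Ts+\lambda^2Tc^2/8)$. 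Choosing $\lambda=4s/c^2$ gives $\exp(-2Ts^2/c^2)$.

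For the lower tail, apply the same argument to $-\xi_t$. These are still martingale differences, and they are bounded by $-b_t\leq-\xi_t\leq-a_t$ with the same width $b_t-a_t\leq c$. We expect no step to be a genuine obstacle; the only point needing care is the measurability of $a_t,b_t$ inside the conditional Hoeffding lemma, handled as above.
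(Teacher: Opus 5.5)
Your proof is correct. It is the standard Chernoff argument: a conditional Hoeffding lemma, iterated through the tower property, with $\lambda=4s/c^2$. The paper gives no proof of this lemma and simply cites Azuma and Hoeffding, so your argument supplies the classical proof behind that citation.

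One step is stated without justification. The claim $p=-a/(b-a)\in[0,1]$ needs $a_t\le 0\le b_t$ almost surely. This follows by conditioning $a_t\le\xi_t\le b_t$ on $\mathcal F_{t-1}$, since the endpoints are $\mathcal F_{t-1}$-measurable and the conditional mean is zero. It also gives $|\xi_t|\le c$, which settles integrability.
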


\begin{lemma}[Concentration of gains and distance estimates]
\label{lem:gain-distance-concentration}
With probability at least $1-\delta$, the following inequalities hold simultaneously for every $i\in[n]$:
\begin{align*}
    \frac1T\sum_{t=1}^T\mathbb E[g_{t,i}\mid\mathcal F_{t-1}]
    &\leq\frac1T\sum_{t=1}^Tg_{t,i}+\frac{5\eps}{32},\\
    |\overline{\hat D}_i-\overline D_i|
    &\leq\frac{\eps}{8}.
\end{align*}
\end{lemma}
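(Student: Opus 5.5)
The plan is to apply \cref{lem:concentration} to $3n$ martingale difference sequences, one upper tail for the gains of each $i$ and two tails for the distance estimates of each $i$, and then take a union bound. That is why $T$ involves $\log(3n/\delta)$.

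\emph{Gains.} For each $i$, set $\xi_t=\mathbb E[g_{t,i}\mid\mathcal F_{t-1}]-g_{t,i}$. This is $\mathcal F_t$-measurable and has zero conditional mean. For the range, $\rho_t\sigma_{t,i}(X_t)\in[-4,4]$ and $\sigma_{t,i}(Y_t)\in[-1,1]$ give $g_{t,i}\in[-5/2,5/2]$ by \cref{eq:estimator-ranges}. Then $\xi_t$ lies in an interval $[a_t,b_t]$ whose endpoints are $\mathcal F_{t-1}$-measurable: the conditional expectation shifted by $\mp5/2$. The length is $c=5$. \cref{lem:concentration} with $s=5\eps/32$ bounds the failure probability by
\[
    \exp\!\left(-\frac{2T(5\eps/32)^2}{25}\right)=\exp\!\left(-\frac{T\eps^2}{512}\right).
\]
The crude constant $c=5$ might not fit. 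If so, I would sharpen the range of $g_{t,i}$ itself. Since $\rho_t\sigma_{t,i}(X_t)\in[-\rho_t,\rho_t]\subseteq[-4,4]$ and $-\sigma_{t,i}(Y_t)\in[-1,1]$, the range of $g_{t,i}$ has length at most $(8+2)/2=5$, so the length cannot be improved below $5$ this way. Instead I would check the exponent against $T\geq400\eps^{-2}\lceil\log(3n/\delta)\rceil$. The bound gives exponent $T\eps^2/512\geq(400/512)\lceil\log(3n/\delta)\rceil$. Because $\log$ is base two, $\lceil\log(3n/\delta)\rceil\geq\log_2(3n/\delta)=\ln(3n/\delta)/\ln2$. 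Since $400/(512\ln2)\approx1.127\geq1$, the exponent is at least $\ln(3n/\delta)$, so the failure probability is at most $\delta/(3n)$.

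\emph{Distance estimates.} For each $i$, set $\xi_t=\hat D_{t,i}-D_{t,i}$. By \cref{lem:distance-estimate-expectations} this has zero conditional mean, and $D_{t,i}$ is $\mathcal F_{t-1}$-measurable. By \cref{eq:estimator-ranges}, $\hat D_{t,i}\in[-3/2,5/2]$, so $c=4$. With $s=\eps/8$, each tail has probability at most
\[
    \exp\!\left(-\frac{2T\eps^2/64}{16}\right)=\exp\!\left(-\frac{T\eps^2}{512}\right)\leq\frac{\delta}{3n},
\]
by the same computation as above. Averaging gives $\frac1T\sum_t\xi_t=\overline{\hat D}_i-\overline D_i$, so both tails together yield $|\overline{\hat D}_i-\overline D_i|\leq\eps/8$.

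A union bound over the $n$ gain events and $2n$ distance events gives total failure probability at most $3n\cdot\delta/(3n)=\delta$. The main care points are these:
\begin{itemize}
    \item The range endpoints must be predictable, i.e.\ $\mathcal F_{t-1}$-measurable. This holds because the conditional expectations are determined by $\bw_t$.
    \item The bounds in \cref{eq:estimator-ranges} hold in every round, which was established by the induction on weight validity.
    \item The base-two ceiling must be converted to natural logarithms, which is where the constant $400$ is checked against $512\ln2$.
\end{itemize}
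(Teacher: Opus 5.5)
Your proposal is correct and follows the paper's proof: Azuma--Hoeffding with $c=5$, $s=5\eps/32$ for the one-sided gain bound and $c=4$, $s=\eps/8$ for both tails of the distance estimates, each giving $\exp(-T\eps^2/512)\le\delta/(3n)$, then a union bound over $3n$ events. Your detour about possibly sharpening $c$ is unnecessary, since the exponent check you give ($400/(512\ln 2)\ge 1$) already suffices; the paper makes the same check via $\ln 2<3/4$.
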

\begin{proof}[Proof of \cref{lem:gain-distance-concentration}]
$T=\left\lceil400\eps^{-2}\left\lceil\log\frac{3n}{\delta}\right\rceil\right\rceil$ and $\ln 2<3/4$ imply
$\frac{T\eps^2}{512}
    \geq\frac{25}{32}\left\lceil\log\frac{3n}{\delta}\right\rceil
    \geq\ln\frac{3n}{\delta}.$

For a fixed $i$, the differences $\xi_t=\mathbb E[g_{t,i}\mid\mathcal F_{t-1}]-g_{t,i}$ have conditional mean zero and conditional range of length five by \cref{eq:estimator-ranges}.
Apply \cref{lem:concentration} with $c=5$ and $s=5\eps/32$ to bound the probability that the first inequality fails for this $i$ by $\exp(-T\eps^2/512)\leq\delta/(3n)$.
A union bound over $i\in[n]$ gives failure probability at most $\delta/3$.

By \cref{lem:distance-estimate-expectations}, $\hat D_{t,i}-D_{t,i}$ has conditional mean zero, and its conditional range has length four by \cref{eq:estimator-ranges}.
Apply both tails of \cref{lem:concentration} with $c=4$ and $s=\eps/8$ to bound the probability that the second inequality fails for this $i$ by $2\exp(-T\eps^2/512)\leq2\delta/(3n)$.
A union bound over $i\in[n]$ gives failure probability at most $2\delta/3$.

Combining the two failure probabilities proves the simultaneous guarantee.
\end{proof}

Combining \cref{lem:distance-estimate-expectations,lem:gain-bound,lem:gain-distance-concentration} gives the following corollary.

\begin{corollary}
    \label{cor:average-distance}
    With probability at least $1-\delta$, the following inequalities hold simultaneously for every $i\in[n]$:
    \begin{equation*}
        \overline D_i\leq d_{\mathrm{TV}}(P,H_i)+\frac{3\eps}{8},\qquad
        |\overline{\hat D}_i-\overline D_i|\leq\frac{\eps}{8}.
    \end{equation*}
\end{corollary}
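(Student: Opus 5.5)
We work on the event of \cref{lem:gain-distance-concentration}, which has probability at least $1-\delta$. Every inequality below is deterministic on this event, so no further union bound is needed. The second inequality of the corollary is exactly the second conclusion of \cref{lem:gain-distance-concentration}, so only the first needs an argument.

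For the first inequality, fix $i\in[n]$. The first step is to average the pointwise bound $D_{t,i}\leq\mathbb E[g_{t,i}\mid\mathcal F_{t-1}]+d_{\mathrm{TV}}(P,H_i)$ from \cref{lem:distance-estimate-expectations} over $t=1,\ldots,T$. This gives
\[
    \overline D_i\leq\frac1T\sum_{t=1}^T\mathbb E[g_{t,i}\mid\mathcal F_{t-1}]+d_{\mathrm{TV}}(P,H_i).
\]
The second step replaces the average conditional gain by the average realized gain, using the first inequality of \cref{lem:gain-distance-concentration}; this costs an additive $5\eps/32$. The third step bounds the average realized gain by $7\eps/32$ using the deterministic bound of \cref{lem:gain-bound}, which holds on every sample path. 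Chaining these gives
\[
    \overline D_i\leq d_{\mathrm{TV}}(P,H_i)+\frac{7\eps}{32}+\frac{5\eps}{32}=d_{\mathrm{TV}}(P,H_i)+\frac{3\eps}{8}.
\]

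There is no real obstacle here. The only point to check is that the event in \cref{lem:gain-distance-concentration} is stated simultaneously for all $i$, so the bound holds uniformly in $i$ on that single event. It also helps that \cref{lem:gain-bound} is deterministic, so invoking it uses no additional probability.
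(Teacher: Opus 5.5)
Your proof is correct. It is essentially the paper's argument: you average $D_{t,i}\leq\mathbb E[g_{t,i}\mid\mathcal F_{t-1}]+d_{\mathrm{TV}}(P,H_i)$ from \cref{lem:distance-estimate-expectations}, pass to realized gains via \cref{lem:gain-distance-concentration}, and finish with the deterministic \cref{lem:gain-bound}.

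Your chain is actually cleaner than the one the paper displays. The paper first detours through $\overline D_i\leq\overline{\hat D}_i+\eps/8$ and then applies \cref{lem:distance-estimate-expectations} to $\overline{\hat D}_i$, which that lemma does not bound pathwise. The leftover $\eps/8$ would also make the constants total $\eps/2$ rather than $3\eps/8$. Your direct route gives exactly $7\eps/32+5\eps/32=3\eps/8$, which is the constant the final $3\,\mathrm{OPT}+\eps$ accounting needs.
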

\begin{proof}[Proof of \cref{cor:average-distance}]
On the event of \cref{lem:gain-distance-concentration}, for every $i\in[n]$, we already have the second inequality $|\overline{\hat D}_i-\overline D_i|\leq\frac{\eps}{8}$. 
The first inequality follows from the following chain of inequalities:
\begin{align*}
    &\overline D_i
    \\ 
    (\cref{lem:gain-distance-concentration})&\leq \overline{\hat D}_i + \frac{\eps}{8}
    \\
    (\cref{lem:distance-estimate-expectations})&\leq \frac1T\sum_{t=1}^T\mathbb E[g_{t,i}\mid\mathcal F_{t-1}]
        + d_{\mathrm{TV}}(P,H_i) + \frac{\eps}{8}
    \\
    (\cref{lem:gain-distance-concentration})&\leq \frac1T\sum_{t=1}^Tg_{t,i}+\frac{5\eps}{32}+ d_{\mathrm{TV}}(P,H_i) + \frac{\eps}{8}
    \\
    (\cref{lem:gain-bound})&\leq d_{\mathrm{TV}}(P,H_i)+\frac{3\eps}{8}.
\end{align*}
\end{proof}

\subsubsection{Approximation guarantee}
On the event in \cref{cor:average-distance}, which has probability at least $1-\delta$.
Fix $i^*\in\argmin_i d_{\mathrm{TV}}(P,H_i)$.
Since the output index $\hat i$ minimizes the estimated average distance, the average distance between the median measures and the output distribution is bounded by
\begin{align*}
    \overline D_{\hat i}
    \leq\overline{\hat D}_{\hat i}+\frac{\eps}{8}
    \leq\overline{\hat D}_{i^*}+\frac{\eps}{8}
    \leq\overline D_{i^*}+\frac{\eps}{4}
    \leq\mathrm{OPT}+\frac{5\eps}{8}.
\end{align*}
where the first, third, and fourth inequalities follow from \cref{cor:average-distance}, and the second inequality follows from the definition of $\hat i$.
Moreover, the median measures are close to the best distribution on average, and thus close to the target distribution, since
\begin{align*}
    \frac1T\sum_{t=1}^T\frac12\|P-U_t\|_1
    &\leq \frac1T\sum_{t=1}^T \frac12\left(\|P-H_{i^*}\|_1+\|H_{i^*}-U_t\|_1\right)\\
    &\leq d_{\mathrm{TV}}(P,H_{i^*})+\overline D_{i^*}\\
    &\leq2\mathrm{OPT}+\frac{3\eps}{8}.
\end{align*}
where the last inequality uses \cref{cor:average-distance}.
For every $t$, use $U_t$ as the intermediate measure in the triangle inequality:
\[
    d_{\mathrm{TV}}(P,H_{\hat i})
    \leq\frac12\|P-U_t\|_1+\frac12\|U_t-H_{\hat i}\|_1
    =\frac12\|P-U_t\|_1+D_{t,\hat i}.
\]
Averaging the triangle inequality through $T$ rounds yields
\begin{align*}
    d_{\mathrm{TV}}(P,H_{\hat i})
    &\leq\frac1T\sum_{t=1}^T\frac12\|P-U_t\|_1+\overline D_{\hat i}\\
    &\leq2\mathrm{OPT}+\frac{3\eps}{8}+\mathrm{OPT}+\frac{5\eps}{8}\\
    &=3\mathrm{OPT}+\eps.
\end{align*}

\section*{AI Disclosure}
ChatGPT was used extensively during the preparation of this work. 
The author suggested using a pointwise median to obtain an improper distribution with approximation factor $2$ on the lower-bound instance of~\cite{aamand2025fast}. 
Then, ChatGPT proposed combining weighted medians with multiplicative weight updates. 
ChatGPT and Claude also assisted with exposition and editing. 
The author takes full responsibility for the results, proofs, and references.

% \section*{Acknowledgements}

% \newpage
\appendix
\section{Finding the weighted median in linear time}
\label{sec:find-weighted-median}

The median-of-medians algorithm of Blum et al.~\cite{blum1973time} finds the $k$-th smallest of $n$ values in deterministic $O(n)$ time, without sorting all the values.
We can adapt this algorithm to find the weighted median of $n$ values with positive weights in $O(n)$ time, without sorting all the values, as shown below. 

\begin{lemma}
\label{lem:find-weighted-median}
Given real values $v_1,\ldots,v_n$ with positive weights $w_1,\ldots,w_n$, satisfying $\sum_{i=1}^n w_i=1$, the smallest input value whose cumulative weight is at least $1/2$ can be found deterministically in $O(n)$ worst-case time.
\end{lemma}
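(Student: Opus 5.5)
We reduce to the classical linear-time selection routine of Blum et al.\ by a prune-and-search recursion on a more general problem: given a multiset of pairs $(v_i,w_i)$ with positive weights and a target $\tau>0$ not exceeding the total weight, return the smallest value $v$ such that $\sum_{i:v_i\leq v}w_i\geq\tau$. The lemma is the case $\tau=1/2$ with the full input. Throughout, all operations used (comparisons, additions, subtractions) are unit-time in the exact-real model of \cref{sec:model}.

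Each recursive call on $k$ pairs proceeds as follows. If $k\leq 5$, sort the values in $O(1)$ time and scan the cumulative weights. Otherwise, use the median-of-medians routine of~\cite{blum1973time} to find, in $O(k)$ time, an (unweighted) pivot value $p$ guaranteeing that at least roughly $3k/10$ of the pairs have value $<p$ or $>p$ on each side; more precisely, at most about $7k/10$ pairs lie strictly below $p$ and at most about $7k/10$ lie strictly above. One linear scan partitions the pairs into $L=\{v_i<p\}$, $E=\{v_i=p\}$, and $G=\{v_i>p\}$, and computes their weights $W_L,W_E,W_G$. There are three cases. If $W_L\geq\tau$, the answer lies in $L$, so recurse on $L$ with the same $\tau$. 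If $W_L<\tau\leq W_L+W_E$, then $p$ is the answer: its cumulative weight is at least $\tau$, and every smaller input value has cumulative weight at most $W_L<\tau$. Otherwise, recurse on $G$ with target $\tau-W_L-W_E$, which is positive and at most $W_G$. Correctness of each case follows directly from the definition, since cumulative weight is monotone in the value and the answer is always an input value.

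For the running time, the recursion satisfies $S(k)\leq S(k/5)+S(7k/10+O(1))+O(k)$. The first term is the pivot computation and the second is the single recursive branch. Since $1/5+7/10<1$, this gives $S(k)=O(k)$ by the standard argument. Ties between values only help, because tied pairs go into $E$ and are never recursed on.

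The main obstacle is the care needed with duplicate values. The standard median-of-medians guarantee is stated for distinct keys, so we must verify that the strict-inequality sets $L$ and $G$ each still have size at most $7k/10+O(1)$ when many values coincide with $p$. This holds because the usual counting argument bounds the number of elements $\geq p$, respectively $\leq p$, from below, and hence bounds the sizes of $L$ and $G$ from above. A secondary point is that the invariant $0<\tau\leq$ (total weight of the current set) must be maintained so that the recursion never falls into an empty set. This follows from the case conditions.
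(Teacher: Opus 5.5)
Your proof is correct and follows essentially the same prune-and-search scheme as the paper, with the same three-case rule and the same residual target $\tau-W_L-W_E$. The one difference is the pivot: the paper calls the Blum et al.\ selection routine as a black box to get the exact ordinary median, so each step at least halves the active set (ties included) and the cost is a geometric series, whereas you use the median-of-medians pivot directly, which is why you need the $3/10$ counting argument with duplicates and the two-term recurrence $S(k)\leq S(k/5)+S(7k/10+O(1))+O(k)$.
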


\begin{proof}[Proof of \cref{lem:find-weighted-median}]
Initialize an active set $A=\{1,\ldots,n\}$ and a residual target $\tau=1/2$.
Use the median-of-medians algorithm to select the ordinary median of the active values, i.e., find the $\lceil |A|/2\rceil$-th smallest value.
Write $z$ for its value, scan the active set and partition $A$ into
\[
  A_< =\{i\in A:v_i<z\},\qquad
  A_= =\{i\in A:v_i=z\},\qquad
  A_> =\{i\in A:v_i>z\},
\]
and compute $W_L=\sum_{i\in A_<}w_i$ and $W_E=\sum_{i\in A_=}w_i$.
Apply the following rules:
\begin{enumerate}
  \item If $\tau\leq W_L$, retain $A_<$ and leave $\tau$ unchanged.
  \item If $W_L<\tau\leq W_L+W_E$, return $z$.
  \item If $\tau>W_L+W_E$, retain $A_>$ and replace $\tau$ by $\tau-W_L-W_E$.
\end{enumerate}
These rules ensure that the active set always contains the smallest value whose cumulative weight is at least $1/2$.
Since $z$ represents the median value of the active set, each nonterminal step at least halves the active-set size, including in the presence of ties.
Thus the procedure terminates after at most $\lceil\log n\rceil$ steps.
Finding the median, computing the cumulative weights, and partitioning take $O(|A|)$ operations per step, for a total of
\[
  O(n)+O(n/2)+O(n/4)+\cdots=O(n),
\]
which is the desired worst-case bound.
\end{proof}

% \newpage
\addcontentsline{toc}{section}{References}
\bibliographystyle{alpha}
\bibliography{references}

\end{document}